 \renewcommand{\arraystretch}{1.5}
\renewcommand{\arraystretch}{1.0}
\newtheorem{thm}{Theorem}
\numberwithin{equation}{section}
\definecolor{lime}{HTML}{A6CE39}
\DeclareRobustCommand{\orcidicon}{%
	\begin{tikzpicture}
	\draw[lime, fill=lime] (0,0) 
	circle [radius=0.16] 
	node[white] {{\fontfamily{qag}\selectfont \tiny ID}};
	\draw[white, fill=white] (-0.0625,0.095) 
	circle [radius=0.007];
	\end{tikzpicture}
	\hspace{-2mm}
}
\xdef\csname orcid\x\endcsname{\noexpand\href{https://orcid.org/\csname orcidauthor\x\endcsname}{\noexpand\orcidicon}}
\begin{document}
\bstctlcite{IEEEexample:BSTcontrol}
    \title{A Detection Mechanism against Load-Redistribution Attacks in Smart Grids}
\author{
Ramin~Kaviani\orcidA{}, \IEEEmembership{Student Member,~IEEE,}
Kory W.~Hedman\orcidB{},~\IEEEmembership{Senior Member,~IEEE}
\vspace{-0.5em}
  
\thanks
{
This work has been implemented to fulfill a part of the project: ``A Verifiable Framework for Cyber-Physical Attacks and Countermeasures in a Resilient Electric Power Grid'' funded by the National Science Foundation (NSF) Award under Grant 1449080. 

R. Kaviani and K. W.~Hedman are with the School of Electrical, Computer, and Energy Engineering, Arizona State University, Tempe, AZ 85281 USA  (e-mail: rkaviani@asu.edu;  kwh@myuw.net).
}
}
% The paper headers
\markboth{IEEE TRANSACTIONS ON Smart Grid, VOL., NO., 2020
}{Ramin \MakeLowercase{\textit{et al.}}: Modified Energy Management System Including Load-Shifting Attack Detection and Enhanced SCED for Post-Attack Corrective Action}

% ====================================================================
\maketitle

% === ABSTRACT ====================================================================
% =================================================================================
\begin{abstract}
This paper presents a real-time non-probabilistic approach to detect load-redistribution (LR) attacks, which attempt to cause an overflow, in smart grids. Prior studies have shown that certain LR attacks can bypass traditional bad data detectors and remain undetectable, which implies that the presence of a reliable and intelligent detection mechanism is imperative. Therefore, in this study a detection mechanism is proposed based on the fundamental knowledge of the physics laws in electric grids. To do so, we leverage power systems domain insight to identify an underlying exploitable structure for the core problem of LR attacks, which enables the prediction of the attackers' behavior. Then, a fast greedy algorithm is presented to find the best attack vector and identify the most sensitive buses for critical transmission assets. Finally, a security index, which can be used in practice with minimal disruptions, is developed for each critical asset with respect to the identified best attack vector and sensitive buses. The proposed approach is applied to 2383-bus Polish test system to demonstrate the scalability and efficiency of the proposed algorithm.

\begin{IEEEkeywords}
cyber-attack detection, false data injection attack (FDIA), greedy algorithm, linear programming (LP), load-redistribution attack detection 
\end{IEEEkeywords}
\end{abstract}
% === KEYWORDS ====================================================================
% =================================================================================

\section*{Nomenclature}
\noindent
	\addcontentsline{toc}{section}{Nomenclature}
	\emph{\textbf{Sets and Indices}}
\begin{IEEEdescription}[\IEEEsetlabelwidth{The  Label}] 

  \item [{$G$}] Set of all generation units.
  \item [{$g$}] Index for generation unit.
  \item [{$G(i)$}] Set of all generation units at bus $i \in N$.
  \item [{$i$}] Index for bus.
  \item [{$K$}] Set of all transmission branches.
  \item [{$k$}] Index for transmission branch.
  \item [{$M$}] Set of all measurements.
  \item [{$m$}] Index for measurement.
  \item [{$N$}] Set of all buses.
  
\end{IEEEdescription}

\noindent
	\addcontentsline{toc}{section}{Nomenclature}
	\emph{\textbf{Parameters,  Vectors  and  Matrices}}
\begin{IEEEdescription}[\IEEEsetlabelwidth{The   Label}] 

  \item [{$\alpha$}] Load shift factor.
  \item [{$\alpha^{min}_k$}] The minimum load shift factor that is the start point for transmission asset $k \in K$ to have overflow.
  \item [{$c_g$}] Production cost of unit $g \in G$.
  \item [{$\mathbf{e}$}] $ n_m \times 1 $ vector of measurement noise errors.
  \item [{$\mathbf{H}$}] $ n_m \times n_b $  Jacobian matrix of the system.
  \item [{$\mathbf{H^\prime}$}] $n_b \times n_b$ dependency matrix between power injection measurements and state variables.
  \item [{$\mathbf{H_i^\prime}$}] $i^{th}$ row of $H^\prime$ ($i \in N$).
  \item [{$L_i$}] Active load (MW) at bus $i \in N$.
  \item [{$lb_i$}] Lower bound for load deviation at each bus $i \in N$.
  \item [{$N_1$}] Number of states that can be compromised by attacker.
  \item [{$n_b$}] Number of buses.
  \item [{$n_{br}$}] Number of transmission branches.
  \item [{$n_m$}] Number of measurements.
  \item [{$\bar{P}_g$}] Fixed dispatch point of unit $g \in G$.
  \item [{$P_k^{max}$}] Continuous thermal rating of transmission branch $k \in K$.
  \item [{$P_g^{min}$}] Lower limit on generation capacity of unit $g \in G$.
  \item [{$P_g^{max}$}] Upper limit on generation capacity of unit $g \in G$.
  \item [{$PTDF_{k,i}^R$}] Power transfer distribution factor for branch $k \in K$ and bus $i \in N$ (injection) with regard to reference bus R (withdrawal).
  \item [{$\tau$}] Residual-based bad data detector threshold.
  \item [{$ub_i$}] Upper bound for load deviation at each bus $i \in N$.
  \item [{$\mathbf{Z}$}] $ n_m \times 1 $ vector of measurements.
  
\end{IEEEdescription}

\noindent
	\addcontentsline{toc}{section}{Nomenclature}
	\emph{\textbf{Variables}}
\begin{IEEEdescription}[\IEEEsetlabelwidth{The   Label}] 

  \item [{$\mathbf{c}$}] $ n_b \times 1 $ vector of false data introduced to bus angles by attacker.
  \item [{$\mathbf{H_{i}^{'} c} (\Delta L_i)$}] Active load deviation at bus $i \in N$.
  \item [{$P_g$}] Dispatch point of unit $g \in G$.
  \item [{$P_l$}] Active power flow on target line $l \in K$.
  \item [{$\mathbf{x}$}] $n_b \times 1$ vector of actual state variables.
  \item [{$\mathbf{\hat{x}}$}] $n_b \times 1$ vector of estimated state variables.
  
\end{IEEEdescription}

% === I. INTRODUCTION =============================================================
% =================================================================================
\section{Introduction}

\setlength{\parindent}{1em}

\IEEEPARstart{I}{n} power systems, state estimation (SE) is one of the key functions of energy management systems (EMSs) since many real-time operational and market decisions are driven by its results. SE is the process of using fields' measurements to estimate systems' state variables with minimum error. 
Due to some limitations, like sensor calibration errors, topology errors, data transfer inaccuracies, and cyber-attacks, received measurements (inputs to SE) are not clean (include noise or false data), which would affect the accuracy of the SE process. To reduce the effect of noisy measurements on the SE process, state estimators are equipped with bad data detectors (BDDs) to flag and remove noisy data. 

False data injection attacks are a class of cyber-attacks that attempt to maliciously change the measurements and interfere in the SE process by targeting the vulnerability of BDDs. BDDs are not looking for intelligent attackers; rather, they are looking for physical limitation driven events\textemdash measurement errors, faulty equipments, etc. Therefore, it would be an easy task for intelligent attackers to bypass BDDs and remain undetectable. The researchers in \cite{liu2011false,kim2013topology,hug2012vulnerability} showed the incapability of BDDs to detect generated FDIAs against both direct current state estimation (DCSE) and alternating current state estimation (ACSE). 
Likewise, they addressed the conditions under which an attacker with complete information about a system could bypass the BDD and remain undetectable. The authors in \cite{ yang2014false} demonstrated that without the assumption of having access to all measurements, launching an FDIA with the least number of measurements to be compromised is an NP-hard problem. To tackle this issue, the authors in 
% \cite{liu2011false}
\cite{hao2015sparse,kosut2011malicious,ozay2013sparse,rahman2012false,anwar2016data,kim2015subspace,tajer2017false} attempted to generate FDIAs with incomplete information about the systems' topology by applying heuristic methods, greedy algorithms, graph-theoretic approaches, and sparse optimization methods. The research study in \cite{yu2015blind} illustrated that even without any information about systems' topology,  attackers could construct undetectable FDIAs.

The focus of this study is on the load-redistribution (LR) attack, which is a way to implement a FDIA against power systems. In LR attacks, the attackers attempt to falsify bus injection measurements to either physically or economically damage the power systems. Various researchers proposed bi-level or attacker-defender optimization problems to model LR attacks with different objectives, like maximizing operation cost or maximizing power flow on a target line \cite{yuan2011modeling,liang2016vulnerability,chu2016evaluating,salmeron2004analysis,liu2016cyber,yuan2012quantitative, kaviani2019identifying}, where the latter is the focus of this study.  
For instance, the attack model in \cite{yuan2011modeling} was designed in a bi-level format, in which the upper level models the attacker’s objective, maximizing the operation cost (generation cost + load shedding cost), and the lower level models the system’s response to the attack based on a base-case security-constrained economic dispatch (SCED). Likewise, the attack models in \cite{liang2016vulnerability} and \cite{chu2016evaluating} were developed in bi-level formats. Their upper level objectives maximized the physical damage of a target line, and their lower levels modeled the systems' response using a nonlinear alternating current optimal power flow (ACOPF) and direct current optimal power flow (DCOPF), respectively. Moreover, the study in \cite{yuan2012quantitative} investigated the physical and economic effects of LR attacks considering both immediate and delayed fashions. For the immediate attacking goal, they proposed a bi-level problem to identify the worst-case attack scenario with an economic goal. For the delayed attacking purpose, they introduced a tri-level problem to maximize the operation cost as a delayed effect of tripping an overloaded line. 

The authors in \cite{tan2017cyber} proposed a bi-level mixed integer linear programming to design an LR attack against multiple transmission assets. LR attacks with incomplete systems' information were designed in \cite{liu2014local} and \cite{liu2015modeling} by finding the best local attacking region. 

Such prior studies have done a great job demonstrating the vulnerability of traditional BDDs, which were previously designed to detect anomalies caused by some physical limitations. It is easy not to be detected when nobody is looking for you or, in other words, ``The greatest trick the devil ever pulled was convincing the world he didn't exist" \cite{movie}.

Now, researchers have acknowledged the existence of attackers and their ability to remain undetectable, which has pushed them to seek a solution. In the first place, standing against intelligent attackers starts by protecting power systems from FDIAs. 
Protection-based actions refer to some preventive actions, which are done at the pre-attack stage, to make it hard for attackers to launch FDIAs against power systems. In this regard, the authors in \cite{yang2017pmu} proposed to place secure phasor measurement units (PMUs) at key buses in the system to defend against FDIAs. 

In \cite{bobba2010detecting}, the authors addressed a way to find the most efficient sets of measurement sensors that need to be protected from the operators' point of view and identified the optimal sets using the brute-force approach. In \cite{deng2017defending}, the authors modeled the problem of finding the least-budget defense strategy as a mixed integer nonlinear programming and applied Benders' decomposition to solve the proposed model. In \cite{esmalifalak2013bad}, the interaction between an attacker and defender is modeled by a two-person zero-sum strategic game where the players attempt to find the Nash equilibrium and maximize their profits, considering the fact that attackers and system operators are not able to attack and defend all measurements. In \cite{kim2011strategic}, the authors proposed a method to find the smallest set of measurements, which provides a protection scheme against the worst-case scenario in which the attack affects the values of the most vulnerable state variables. In \cite{deka2014data}, the authors investigated the graph theory to find the minimum set of measurements that need to be protected. The authors in \cite{kim2011strategic} and \cite{deka2014data} developed their methods based on greedy algorithms for solving NP-hard protection-based problems. The authors in \cite{dan2010stealth} determined the smallest set of protected measurements based on an iterative path augmentation algorithm for both perfect protection and non-perfect protection cases, which refer to protection schemes with zero possibility of hidden attacks and possibility of hidden attacks, respectively. In \cite{hu2017secure}, the authors proposed an algorithm to secure the SE process, as well as a method to reconstruct the attacked signals. However, they focused on a noiseless framework, which is not the case in reality. 

Referring to 
\cite{deka2016jamming}, attackers still could launch an attack even when all measurements have been protected from FDIAs except one of them, which implies the necessity of a detection scheme. Therefore, designing intelligent false data detectors is the next step to stand against intelligent attackers. Various FDIA detection methods were proposed and developed in \cite{gao2016identification,manandhar2014detection,chaojun2015detecting,liu2014detecting,li2015quickest, esmalifalak2017detecting,ozay2016machine,foroutan2017detection,andrea2018machine,he2017real,li2018false} based on various techniques like the Kalman filter, adaptive cumulative sum, low-rank decomposition (LD), Kullback-Leibler distance (KLD), sparse optimization, machine learning, and deep learning.

In this study, we develop a non-probabilistic detection mechanism based on the fundamental knowledge of the  laws  of  physics in power systems to detect LR attacks, which attempt to cause overflows on transmission assets. This is an online monitoring mechanism that allows operators to track load deviations (given a target asset) at each time interval and flag malicious movements.

Our approach differs from other existing methods in different aspects, like, 
\begin{itemize}
    \item Our method successfully detects LR attacks (even the weakened ones) assuming that attackers have no limitation for altering state variables. For instance, in \cite{gao2016identification} and \cite{liu2014detecting}, the authors developed their proposed detection methods based on the assumption that attackers are limited to alter some of the state variables. Moreover, compared to these studies, our proposed method is modeled based on a linear and convex problem. At the same time, the authors in \cite{gao2016identification} and \cite{liu2014detecting} used the matrix low rank decomposition technique to detect false data, which introduces non-linear convex optimization problems with more computational complexity.
    
    \item Our method successfully distinguishes random attacks from normal noise errors. In this regard, the study in \cite{manandhar2014detection} proposed to use the Kalman filter and a Euclidean distance metric to overcome the disability of existing Chi-Square statistic-based detectors to detect FDIAs, which are wisely designed to fit the distribution of historical data or normal noise errors. The authors reported $99.73 \%$ accuracy to filter false positives due to the noise errors, but they set the threshold three times bigger than the standard deviation of the generated random noise errors. Whereas, our method perfectly distinguishes LR attacks scenarios (even the weakened ones) from different samples of Gaussian and non-Gaussian noise errors with realistic assumptions. In other words, our method is successful in detecting unobservable LR attacks, which are wisely created in such a way that the deviations fall into the potential spectrum of generally accepted noise errors but have the preferred values be at preferred buses.
    
    \item Compared to \cite{chaojun2015detecting} and \cite{li2015quickest}, our method has not been developed based on the historical or statistical data; instead, it uses the current data of the system (the SE output) to detect any malicious movement.
    
    \item In \cite{esmalifalak2017detecting,ozay2016machine,foroutan2017detection,andrea2018machine,he2017real}, the authors developed detection mechanisms against cyber-attacks based on machine learning. Machine learning can make cyber security simpler, less expensive, and far more effective. However, it can only do these things when there is a large amount of underlying historical data that provides a complete picture of the environment. However, the proposed method in this study can detect LR attacks regardless of the quantity and quality of the available historical data.
\end{itemize}

The main contributions of our study are summarized as follows:
\begin{enumerate}
 \item Leveraging power systems domain insights to identify an underlying exploitable structure in LR attack problems, which helps operators to predict the attackers' behavior.

\item Mathematically proving the ability of a greedy algorithm to solve the exploitable structure of LR attack problems to optimality, which leads system operators to find the most sensitive buses very fast even for large interconnections. 

\item Proposing the number of proper deviations at sensitive buses (NPDSB) as an index that can detect LR attacks and determine a perfect boundary between LR attacks and normal noise errors.

\item Developing a real-time approach to detect LR attacks, with the goal of causing an overflow on a transmission asset, without significant changes and disruptions in existing EMSs.
\end{enumerate}
This paper is organized as follows. Sec. II presents a short background on DCSE, the condition to launch an undetectable FDIA against DCSE, and LR attacks. Sec. III is divided into two subsections; the first one identifies the exploitable structure of the core problem of more sophisticated LR attack problems and the second one provides a mathematical proof to demonstrate the ability of a greedy algorithm to obtain a global optimum for the identified structure of the core problem. Sec. IV and V present simulation results and concluding remarks, respectively.

% === II. FDI cyber attack on DC SE and attack model ========================
% =================================================================================
\section{Background}

\subsection{DCSE and Undetectable FDIA}\label{A}
In the DCSE process, measurements are linked to state variables (voltage angles) via linear equations. Eq. \ref{DC_SE_formula_1} represents these linear equations in a matrix form:

\begin{equation}\label{DC_SE_formula_1}
\small
\mathbf{Z = Hx + e} ~,
\end{equation}
where $\mathbf{Z}$ is the $n_m \times 1$ vector of measurements, $\mathbf{x}$ is the $n_b \times 1$ vector of actual state variables of the system that needs to be estimated, $\mathbf{H}$ is the $n_m \times n_b$ Jacobian matrix of the system, and $\mathbf{e}$ represents the $n_m \times 1$ vector of measurement noise errors.

A common approach to measure the accuracy of the SE process is to compare the $2$-norm of the measurements residual
with a certain threshold ($\tau$).
Then, if the $2$-norm of the residual for a set of measurements ($\mathbf{Z}$) is greater than $\tau$, it means that $\mathbf{Z}$ contains unacceptable bad data. The $2$-norm of the residual is determined as shown in equation \ref{residual}, where $\mathbf{\Hat{x}}$ is the $n_b \times 1$ vector of estimated states and the $||.||_2$ denotes the $2$-norm of a vector, also known as the Euclidean norm, which calculates the distance of a vector coordinate from the origin of the vector space.
\begin{equation}\label{residual}
\small
\mathbf{||R||_2} = \mathbf{||Z-H\mathbf{\Hat{x}}||_2}.
\end{equation} 

A key theorem in \cite{liu2011false} states that the vector of contaminated measurements $\mathbf{Z_a = Z +a}$, in which vector $\mathbf{a}$ represents the malicious data added to
actual measurements, is able to bypass residual-based BDDs if it is a linear
combination of the column vectors of the Jacobin matrix $\mathbf{H}$. Therefore, the authors in \cite{liu2011false} defined $\mathbf{a = Hc}$, in which $\mathbf{c}$ is the state variable errors' vector, and proved the residual-based BDD deficiency to detect the attack vector $\mathbf{a}$. \\
\textbf{Proof}:
Assume that the vector of estimated state variables after adding vector $\mathbf{a}$ to the actual vector of measurements  $\mathbf{Z}$ is $\mathbf{\Hat{x}_a}=\mathbf{\Hat{x}} + \mathbf{c}$, then the $2$-norm of the residual after the attack is $ \mathbf{||Z_a-H\mathbf{\Hat{x}_a}||_2}$. After substituting $\mathbf{Z_a}$ with $\mathbf{Z+a}$ and $\mathbf{\Hat{x}_a}$ with $\mathbf{\Hat{x}} + \mathbf{c}$, the $2$-norm of the residual is converted to $\mathbf{||Z -H\Hat{x}} + (\mathbf{a -Hc})||_2$. Then, considering the first and main assumption in the theorem ($\mathbf{a = Hc}$),  equation \ref{attackedresidual} is true.

\begin{equation}\label{attackedresidual}
\small
\mathbf{||R_a||} = \mathbf{||R||} = \mathbf{||Z-H\Hat{x}||_2}<\tau.
\end{equation}

\subsection{LR Attacks}\label{B}
Every LR attack starts by falsifying bus injection measurements. In this paper, it is assumed that the attackers in LR attacks avoid changing the measurements related to the generation part since the control center directly communicates with the power plants' control rooms. Moreover, there should not be deviations at zero injection buses. 

In this paper, the only way to damage power systems through an LR attack is to increase the loads at some buses and decrease the loads at other buses. The net load should remain unchanged to avoid frequency issues. Likewise, attackers should modify power flow measurements to follow load deviations. In addition, the load deviation at each bus should be neither
more nor less than pre-determined constant values. If so, the operator would flag that set of load measurements since it has load deviations far from the short-term load forecasting. These constant values are usually determined by a percentage of the forecasted load value at each bus in different directions.

At the end, after generating an undetectable LR attack, the SCED is fed with a contaminated set of loads and provides a set of fake dispatch points that leads the system to an insecure or inefficient operating state.

For instance, Fig. \ref{bi-level-opt} illustrates an example of a bi-level LR attack problem to maximize the flow of a target transmission branch (line $l$) with limited access to specific meters  \cite{chu2016evaluating}. In the upper-level, the attacker attempts to maximize the power flow on the target line subject to the number of available resources ($N_1$) and the limitations on load deviations. The lower-level is a DCOPF that models the system's response to the attack vector generated in the upper-level. 
    
\begin{figure}[!h]
\centering
\includegraphics[trim =37mm 111.5mm 10mm 61mm, clip, width=11.0cm]{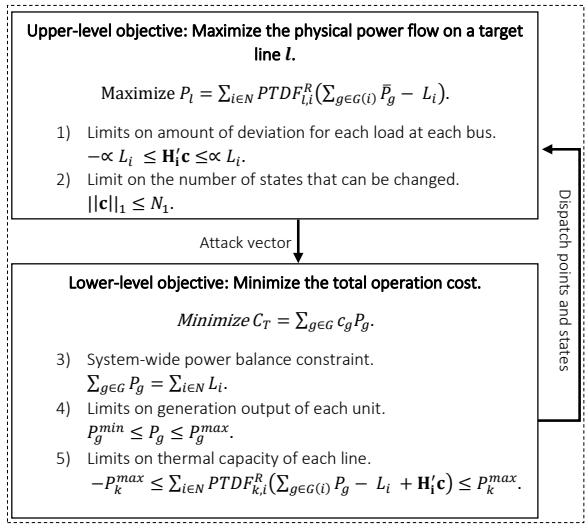}
\caption{A bi-level model for generating an LR attack to physically damage a particular transmission asset.}
\label{bi-level-opt}
\end{figure}    
\vspace{-1.2em}    
% === III. load-shifting attack, greedy method, and detection mechanism impacts =======================================
% =================================================================================
\section{Modeling and Methodology}\label{sectionIII}

There are some drawbacks associated with protection-based schemes, such as they reduce measurement redundancy \cite{chaojun2015detecting} and they could not guarantee a perfect protection against FDIAs \cite{deka2016jamming}. Therefore, we proposed a detection mechanism against LR attacks, which is developed based on the deep knowledge of power systems.
To do so, firstly, an exploitable structure for the core problem of bi-level LR attack problems is identified. Then, based on a theorem for the optimality conditions of that exploitable structure, the proposed approach is developed and described.

\subsection{The Identified Exploitable Structure of the Core Problem}\label{3.A}
LR attacks are designed in such a way that they move load measurements up and down so that attackers achieve the maximum physical damage on a target transmission asset. Changing the load pattern will affect power flows. Power transfer distribution factors (PTDFs), or shift factors (SFs), determine the impact of the load change at each bus on a particular transmission asset. For instance, assume an operator wants to remove $20$ MW overflow on a particular line. To do so, the operator will take a generator at a bus with PTDF equals to $0.5$ for that line and move it by $-20$ MW. Then he/she will take a different generator at another bus that has a PTDF for that line at $-0.5$ and move it by $20$ MW. This procedure would result in $-20$ MW and $0$ MW change in the line's flow and total supply, respectively. 

Hence, the trivial approach for an operator who wants to reduce the flow on a particular line, with minimum changes, is to rank all PTDFs with flexible resources from largest to smallest (the most positive to the most negative). 
Then, he/she simply starts reducing the net injection of the resource at the top and simultaneously (MW for MW), increasing the resource at the bottom. If either resource runs out of capacity he/she moves to the next resource on that end and continues until the overflow disappears. 

The essence of the attackers' approach is also the same as the operators' approach. Still, attackers are limited by the number of changes they can apply to the original
resources to avoid being detected. Consequently, problem \ref{Algorithm_3's_Alternative_1}-\ref{Algorithm_3's_Alternative_3} is defined as the core problem of LR attack problems, which attempts to maximize a branch overflow (in a proper direction) relative to the flexibility of resources throughout the system.
\begin{equation}\label{Algorithm_3's_Alternative_1}
\small
\underset{\mathbf{H_{i}^{'} c}}{\text {Maximize}} ~ \pm \sum_{i \in N} (\mathbf{H_{i}^{\prime} c})PTDF_{l,i}^{R} ~,
\end{equation}
\begin{equation}\label{Algorithm_3's_Alternative_2}
\small
s.t. \ \ -\alpha L_i \leq \mathbf{H_{i}^{\prime} c} \leq \alpha L_i ~ i \in N ~, 
\end{equation}
\begin{equation}\label{Algorithm_3's_Alternative_3}
\small
\ \  \ \ \sum_{i \in N} \mathbf{H_{i}^{\prime} c} = 0 ~,
\end{equation}
where $PTDF_{l,i}^{R}$ is the power transfer distribution factor for branch $l \in K$ with respect to the injection at bus $i \in N$ and withdrawal from the reference bus $R$. $\mathbf{H_{i}^{\prime} c}$ is equal to $\Delta L_i$, which is the malicious load deviation at bus $i\in N$. We used $\mathbf{H_{i}^{\prime} c}$ to emphasize the fact that attackers need to change bus angles in order to get appropriate deviations in loads. For instance, if the above problem results in $5$ MW load deviation at bus $2$ ($\mathbf{H_{2}^{\prime} c} = \Delta L_2 = 5$ MW), the attacker needs to design vector $\mathbf{c}$ in such a way that the value of $\mathbf{H_{2}^{\prime} c}$ is equal to $5$ MW. The load shift factor and forecasted load at each bus $i \in N$ are presented by $\alpha$ and $L_i$, respectively.

In problem \ref{Algorithm_3's_Alternative_1}-\ref{Algorithm_3's_Alternative_3}, the main decision variables are the load deviations and the objective is to maximize the overflow on a target transmission asset. 
The $\pm$ notation implies that the overflow direction on a target transmission asset could be both positive and negative, and depends on the target asset's pre-attack power flow direction. For instance, if an attacker wants to cause an overflow on a target asset with the pre-attack power flow equals to $+100$ MW, he/she should use $(+)$ in equation (\ref{Algorithm_3's_Alternative_1}), and if the target asset's pre-attack power flow is $-100$ MW, the attacker should use $(-)$ in equation (\ref{Algorithm_3's_Alternative_1}). 

Constraints in \ref{Algorithm_3's_Alternative_2} impose the deviation at each bus to be neither more nor less than $+ \alpha$ or $- \alpha$ percent of the forecasted load at that bus (they also impose no change at zero injection buses), respectively. Constraint \ref{Algorithm_3's_Alternative_3} ensures that the net load after the LR attack remains unchanged.

In the following, the impacts of two different LR attacks, created by solving problem \ref{Algorithm_3's_Alternative_1}-\ref{Algorithm_3's_Alternative_3}, on a particular line in a 3-bus system (Fig. \ref{3-bus}) is illustrated and the results are presented in Table \ref{3bus_table1}. 

\begin{figure}[!ht]
\centering
\includegraphics[trim =58mm 153mm 11mm 26mm, clip, width=12.0cm]{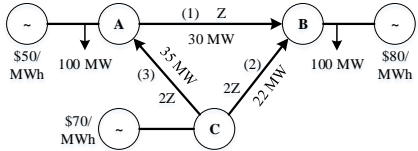}
\caption{3-bus test case diagram.}
\label{3-bus}
\end{figure}

\begin{table}[!ht]
\scriptsize
\caption{The cyber and physical flows associated with two different LR attacks against target line $1$ in the 3-bus test case.}
\label{3bus_table1}
\centering
\begin{tabular}{|c|c|c|c|c|}
\hline
 & \multicolumn{4}{c|}{\textbf{Attack Scenarios}} \\ 
\hline
    & \multicolumn{2}{c|}{$\alpha = \%5$} & \multicolumn{2}{c|}{$\alpha = \%10$} \\
\hline
\multirow{8}{*}{\makecell{\textbf{Cyber} \\ \textbf{Results}}}    & L$_A$ (MW) & $105$   & L$_A$ (MW)  &  $110$  \\ \cline{2-5}
     & L$_B$ (MW)   & $95$  & L$_B$ (MW)  &  $90$  \\ \cline{2-5}
     & L$_{Ref}$  (MW) & $0$  &  L$_{Ref}$  (MW) &  $0$  \\ \cline{2-5}
     & P$_A^g$   (MW) &  $142.5$ & P$_1^g$  (MW) &  $147.5$  \\  \cline{2-5}
     & P$_B^g$  (MW) & $57.5$  & P$_2^g$  (MW) & $52.5$  \\  \cline{2-5}
     & P$_C^g$  (MW) & $0$  & P$_3^g$  (MW) &   $0$ \\  \cline{2-5}
     &  P$_1^{target}$  (MW) & $30$  & P$_1^{target}$  (MW) &  $30$  \\  \cline{2-5}
     &  Cost ~(\$) & $11725$  & Cost (\$)  &  $11575$  \\ \hline
      \multirow{3}{*}{\makecell{\textbf{Physical} \\ \textbf{Results}}} &  P$_1^{target}$ (MW) & $34 > 30$ & P$_1^{target}$ (MW) &  $38 > 30$  \\ \cline{2-5}
     & P$_2^k$  (MW) &  $-8.5$ & P$_2^k$  (MW)  & $-9.5$   \\ \cline{2-5}
     & P$_3^k$  (MW)&  $8.5$ & P$_3^k$  (MW) & $9.5$   \\  
\hline
\end{tabular}
\end{table}

There are generation units at all buses and bus C is the reference bus. The minimum and maximum capacities of all three units are $0$ and $150$ MW, respectively. Line $1$ is the target line, and two different attack vectors were created based on two different load shift factors ($5 \%$ and $10 \%$). To get the results in Table \ref{3bus_table1}, we: 1) solved problem \ref{Algorithm_3's_Alternative_1}-\ref{Algorithm_3's_Alternative_3} to find the most damaging attack vectors, 2) used the generated attack vectors to falsify the original loads, 3) ran the DCOPF problem to achieve the fake dispatch points, and 4) ran the DC power flow (DCPF), considering the actual loads and fake dispatch points, to find the actual physical power flows on all transmission lines. 

In case $1$, when $\alpha$ was $ 5 \%$, malicious deviations ($\Delta L_A$ = $+5$ MW, $\Delta L_B$ = $-5$ MW) led the DCOPF to provide a set of fake dispatch points (P$_A^G$ = $142.5$ MW, P$_B^G$ = $57.5$ MW, and P$_C^G$ = $0$). Considering the actual loads (L$_A$ = $100$ MW, L$_{B}$ = $100$ MW), this set of fake dispatch points caused P$_1^{target}$ = $34$ MW, P$_2^k$ = $-8.5$ MW, and P$_3^k$ = $8.5$ MW, which showed $13.3 \%$ overflow on line $1$. In case $2$, when $\alpha$ was $ 10 \%$, all simulations were repeated. This time the physical line flows were P$_1^{target}$ = $38$ MW, P$_2^k$ = $-9.5$ MW, and P$_3^k$ = $9.5$ MW, which showed $26.6 \%$ overflow on line $1$. 

The results demonstrate that as the attack's energy increases ($\alpha$ increases), the damage could be more significant, which at some point in time could cause the target line trips offline and results in a cascading blackout. However, there should be a trade-off between the attack's energy and the detection probability since as the energy increases, the detection probability increases. 

In this study, linear optimal power flow models have been considered; this work is extendable for non-convex ACOPF formulations since the underlying special structure in the classical DCOPF is caused by Kirchhoff's Voltage Law (KVL) and Kirchhoff's Current Law (KCL), which remain present in all optimal power flows (OPFs).

\subsection {Proving the Application of a Greedy Algorithm to Solve the Core Problem of the LR Attack}
After identifying the special structure of the core problem, the next step is to prove that this problem, which is a variant of the fractional knapsack problem \cite{ferdosian2016greedy} from an operations research perspective, can be solved to optimality with a greedy algorithm. Hence, in this part, the ability of greedy algorithms to optimize problem \ref{Algorithm_3's_Alternative_1}-\ref{Algorithm_3's_Alternative_3} has been proved and presented.

Greedy methods attempt to build up a solution for a mathematical problem by making a sequence of choices. These choices depend on each other, and the previous choices in the solving process affect the other decisions that can be made later in the process. 
Considering the values of possible choices at each step, a greedy algorithm selects the best local choice. This choice is called a greedy choice, and the resulting algorithm is called a greedy algorithm. Greedy algorithms produce good solutions for some mathematical problems. For example, it provides the global optimum for the fractional knapsack problem \cite{ferdosian2016greedy}.

In the following, a mathematical proof is presented to demonstrate that a greedy algorithm can solve problem \ref{Algorithm_3's_Alternative_1}-\ref{Algorithm_3's_Alternative_3} to optimality. 

After applying a greedy algorithm to solve this problem, at least one of the decision variables ($\Delta L_i$) is either at its lower bound $(l_i)$ or upper bound $(u_i)$, so optimality follows from the theorem below.

\begin{thm}
Feasible solution $(\Delta L_1 ,  ... , \Delta L_{n_b})$ is optimal if and only if, whenever $PTDF_{l,i}^R > PTDF_{l,j}^R$, we find that $\Delta L_i=u_i$ or $\Delta L_j=l_j$ (or both).
\end{thm}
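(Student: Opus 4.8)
The plan is to prove the two directions of the biconditional separately, treating the problem as a continuous (fractional) variant of knapsack with an equality constraint. First I would set up notation: order the buses so that $PTDF^R_{l,1} \geq PTDF^R_{l,2} \geq \cdots \geq PTDF^R_{l,n_b}$ (taking the $(+)$ case of the objective without loss of generality), and think of each $\Delta L_i \in [l_i, u_i]$ with $l_i = -\alpha L_i$, $u_i = \alpha L_i$. The objective is $\sum_i \Delta L_i \, PTDF^R_{l,i}$ and the only coupling constraint is $\sum_i \Delta L_i = 0$. The key structural observation, which I would state first, is an \emph{exchange argument}: if we increase $\Delta L_i$ by a small $\delta > 0$ and decrease $\Delta L_j$ by the same $\delta$, feasibility with respect to the equality constraint is preserved, and the objective changes by $\delta\,(PTDF^R_{l,i} - PTDF^R_{l,j})$.

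For the ``only if'' direction (necessity), I would argue by contraposition: suppose there exist $i, j$ with $PTDF^R_{l,i} > PTDF^R_{l,j}$ but $\Delta L_i < u_i$ and $\Delta L_j > l_j$. Then choose $\delta = \min(u_i - \Delta L_i,\ \Delta L_j - l_j) > 0$, perform the swap above, and observe the objective strictly increases while all constraints (\ref{Algorithm_3's_Alternative_2})–(\ref{Algorithm_3's_Alternative_3}) remain satisfied; hence the original solution was not optimal. This is the easy direction and is essentially immediate from the exchange argument.

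For the ``if'' direction (sufficiency), I would show that any feasible solution satisfying the stated condition has the same objective value as the canonical greedy solution, which is known to be optimal. The condition ``whenever $PTDF^R_{l,i} > PTDF^R_{l,j}$, then $\Delta L_i = u_i$ or $\Delta L_j = l_j$'' forces a threshold structure: there is a ``critical'' PTDF level such that every bus with strictly larger PTDF is pinned at its upper bound $u_i$, every bus with strictly smaller PTDF is pinned at its lower bound $l_j$, and only buses tied at the critical level are allowed to be interior. Given the equality constraint $\sum_i \Delta L_i = 0$, the total amount placed on the tied middle buses is then determined, and since those buses share a common PTDF value the objective contribution is the same regardless of how the slack is distributed among them. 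Thus all solutions meeting the condition attain the common optimal value produced by the greedy procedure; equivalently, I can run the exchange argument in reverse to transform any such solution into any other at no change in objective.

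The main obstacle is handling the ties cleanly in the sufficiency direction — ensuring that the ``threshold'' is well-defined when several buses share the same PTDF, and that the equality constraint $\sum_i \Delta L_i = 0$ is actually consistent with pinning the high-PTDF buses up and the low-PTDF buses down (i.e., that the greedy fill is feasible at all, which uses $\sum_i l_i \leq 0 \leq \sum_i u_i$, a consequence of $l_i = -\alpha L_i \leq 0 \leq \alpha L_i = u_i$). A secondary subtlety worth a sentence is the $\pm$ in the objective: the $(-)$ case is symmetric under reversing the PTDF ordering (or negating the objective), so it suffices to prove the $(+)$ case. I would also remark that strong duality / LP optimality conditions give an alternative proof — the stated condition is exactly complementary slackness for the dual variable of (\ref{Algorithm_3's_Alternative_3}) — but the exchange argument is more self-contained and I would present that as the primary route.
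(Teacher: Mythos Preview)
Your necessity (``only if'') direction is identical to the paper's: both use the exchange $\Delta L_i \mapsto \Delta L_i + \delta$, $\Delta L_j \mapsto \Delta L_j - \delta$ with $\delta = \min(u_i - \Delta L_i,\ \Delta L_j - l_j)$ to strictly improve the objective.

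For sufficiency, the paper takes a different route than your threshold-structure argument. Rather than characterizing all solutions satisfying the condition and showing they share the same objective value, the paper argues by minimal counterexample: assume $\mathbf{S}$ satisfies the condition but is not optimal, and pick an optimal $\mathbf{O}=(y_1,\dots,y_{n_b})$ that disagrees with $\mathbf{S}$ in as few coordinates as possible. Because $\sum_t y_t = \sum_t \Delta L_t = 0$, there must exist indices $a,b$ with $y_a > \Delta L_a$ and $y_b < \Delta L_b$; then $\Delta L_a < u_a$ and $\Delta L_b > l_b$, so the condition on $\mathbf{S}$ forces $PTDF_{l,a}^R \le PTDF_{l,b}^R$. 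Shifting $\delta = \min(y_a-\Delta L_a,\ \Delta L_b - y_b)$ from $y_a$ to $y_b$ inside $\mathbf{O}$ either strictly improves $\mathbf{O}$ (contradicting optimality) or, in the tie case $PTDF_{l,a}^R = PTDF_{l,b}^R$, yields another optimal solution with strictly fewer disagreements from $\mathbf{S}$ (contradicting minimality). Either way, $\mathbf{S}$ must be optimal.

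The trade-off: the paper's minimal-disagreement exchange handles ties automatically and never needs to exhibit the greedy solution or verify its feasibility, whereas your threshold argument is more structural and explains \emph{why} all condition-satisfying solutions coincide (they differ only on buses with equal PTDF), but you must be careful to avoid circularity --- do not invoke ``the greedy solution is known to be optimal'' as an input, since that is exactly what the theorem is meant to establish. The clean way to close your version is: necessity shows every optimum satisfies the condition; your threshold analysis shows all condition-satisfying solutions have equal objective; an optimum exists (compact feasible set, linear objective); hence every condition-satisfying solution is optimal. Your remark about complementary slackness is apt and is yet another valid route the paper does not take.
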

\begin{proof}
$\rightarrow$ Suppose by contradiction that there is an optimal solution for which $PTDF_{l,i}^R > PTDF_{l,j}^R$, $\Delta L_i<u_i$, and $\Delta L_j > l_j$. Compute  $\delta = min(u_i-\Delta L_i,\Delta L_j-l_j)$. Then, add $\delta$ to $\Delta L_i$ and subtract it from $\Delta L_j$, which gives another feasible solution. However, $\sum_{t}^{n_b} \Delta L_tPTDF_{l,t}^R$ increases  by $\delta(PTDF_{l,i}^R-PTDF_{l,j}^R)$, which is positive. Hence the solution cannot be optimal.

\begin{figure*}[!b]
\centering
\includegraphics[trim =20mm 0mm 6mm 21mm, clip, width=16cm, height=8.05cm]{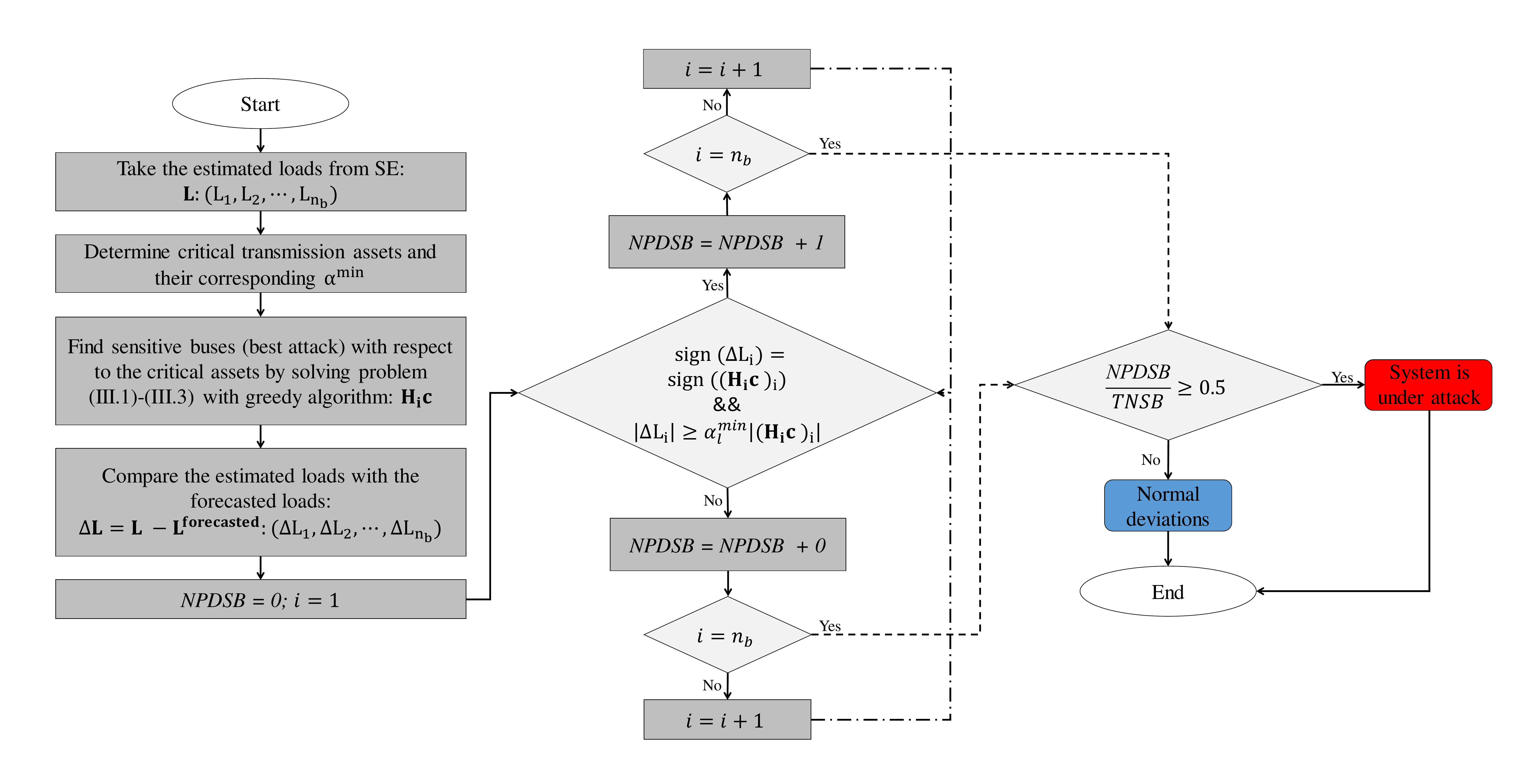}
\caption{The proposed LR attack detection flowchart.}
\label{Flowchart}
\end{figure*}

$\leftarrow$ Suppose by contradiction $\mathbf{S} = (\Delta L_1, ... , \Delta L_{n_b})$ is a feasible solution for which whenever $PTDF_{l,i}^R > PTDF_{l,j}^R$, $\Delta L_i = u_i$ or $\Delta L_j = l_j$ but is not an optimal solution. Choose an optimal solution $\mathbf{O} = (y_1 , ... , y_{n_b})$ in which the number of times that $\Delta L_t \neq y_t$, ($t \in N$) is as small as possible. Note that $\sum_{t}^{n_b} y_tPTDF_{l,t}^R > \sum_{t}^{n_b} \Delta L_tPTDF_{l,t}^R$. Because $\sum_{t}^{n_b} y_t = \sum_{t}^{n_b} \Delta L_t = 0$ there is an item $a$ for which $y_a > \Delta L_a$ and another item $b$ for which $y_b < \Delta L_b$. It follows that $\Delta L_a < u_a$ and $\Delta L_b > l_b$ (by the conditions that S satisfies), and hence that $PTDF_{l,a}^R \leq PTDF_{l,b}^R$. Let $\delta = min(y_a-\Delta L_a,\Delta L_b-y_b)$. In $\mathbf{O}$, subtract $\delta$ from $y_a$ and add $\delta$ to $y_b$, to get a feasible solution $\mathbf{O'}$ that changes $\sum_{t}^{n_b} y_tPTDF_{l,t}^R$ by $\delta(PTDF_{l,b}^R-PTDF_{l,a}^R)$. Now if $PTDF_{l,a}^R<PTDF_{l,b}^R$, $\mathbf{O'}$ yields a larger sum that does $\mathbf{O}$; this contradicts the optimality of $\mathbf{O}$. So, this must mean that $PTDF_{l,a}^R = PTDF_{l,b}^R $. Then $\mathbf{O'}$ is also optimal but, by construction, has fewer items than $\mathbf{O}$, in which it disagrees with $\mathbf{S}$; this contradicts the requirement that $\mathbf{O}$ is an optimal solution with fewest such differences. Therefore, it is concluded that no such $\mathbf{O}$ can exist, and hence that $\mathbf{S}$ is optimal.
\end{proof}

This proof for global optimality results in developing a mechanism to predict the attackers' moves and find the sensitive buses, given a target transmission asset. In fact, this proof shows that the identified structure of the core problem is solvable by a trivial sorting approach, and there is no reason to solve any complicated problem for operators to detect LR attacks, since the attackers' strategies are strikingly simple and trivial for this type of attacks. By using this mechanism, operators can swiftly determine the sensitive buses for any critical transmission asset and track the deviations at those buses to flag any set of changes that contributes to overload that asset. Consequently, achieving a global solution (also near optimality) becomes impossible for attackers. 
Therefore, attackers have to introduce some form of randomness to avoid being detected. It is predictable that even though attackers can create a situation where randomness is applied to their strategy, so much of the feasible space is cut off by the proposed attack detection mechanism, and the impacts of this class of attacks is rendered to be very low.

After finding the sensitive buses associated with all critical transmission assets, the NPDSB determines whether the current set of estimated loads is malicious or normal. To find the NPDSB related to each set of deviations, operators need to check both direction and magnitude of deviations at sensitive buses. Checking the direction of each deviation is simple and straightforward. For instance, if the greedy algorithm results in the load at sensitive bus $i$ to increase, but the deviation related to the current set of estimated loads at sensitive bus $i$ is negative, regardless of the deviation magnitude, it does not count as a proper deviation at sensitive bus $i$. In the next step, the deviation magnitude should be checked; a deviation with a small magnitude could not contribute significantly and it might be a normal noisy deviation. To do so, operators need to have an appropriate threshold value to differentiate malicious and significant deviations from normal noise errors.
This threshold value is different for each transmission asset in the system. However, considering a system with five critical assets that can be overloaded, as $\alpha$ can be a maximum of $10 \%$, calculating the threshold values for these five assets is sufficient. In this regard, we defined a factor of the forecasted load at each bus as the threshold value for deviation at that bus; if the deviation at a bus is more than its threshold value, it counts as a proper deviation and vice versa. In this study, we used $\alpha_{l}^{min}$ as the factor of the forecasted load to find the magnitude threshold for each critical transmission asset. This is the minimum value of the load shift factor that causes an overflow on asset $l$. This policy was made based on the trivial fact that every attack, which is created with $\alpha$ less than $\alpha_{l}^{min}$, is not an effective attack to cause an overflow on the target asset, and no attacker attempts to attack a system without any damage.

At this point, after finding the NPDSB associated with the current set of loads, operators should make a decision based on the value of the NPDSB and categorize the current deviations either as malicious or normal. Therefore, another threshold for the value of the NPDSB should be defined, which enables operators to determine whether the system is under an LR attack from the viewpoint of flow violations. In this study, if the NPDSB associated with a set of deviations is more than the half of the total number of sensitive buses (TNSBs), which is the number of buses with the PTDF values more than $0.01$ (the cut-off value for PTDFs), then that set of deviations is flagged as a malicious set. The flowchart in Fig. \ref{Flowchart} gives a better view of how the proposed detection mechanism works.

% === IV. simulation and results ========================================
\section{Simulation and Results}
\vspace{-0.5em}
\subsection{Illustrative Test Case}\label{resultsA}
Here, for more clarification, the small IEEE 6-bus test case, shown in Fig. \ref{6Bus_Bestattack&A1}, is used to illustrate the gains from our proposed approach. In this experiment, we generated 
two random vectors ($\mathbf{a_1}$ and $\mathbf{a_2}$) in such a way that one of them is an attack vector and the other is not; then, we used our proposed mechanism to find the attack vector.

Both vectors are samples from a normal distribution, but the one that is the attack is simply arranged in such a way that causes an overload on the vulnerable line from bus $3$ to bus $5$ (line 3-5); this is the basic technique of an unobservable attack: have the deviations fall into the potential spectrum of generally accepted noise error but have the preferred values be at preferred buses. All required information including the load at each bus, PTDFs with respect to line 3-5, and vector $\mathbf{a_1}$ and $\mathbf{a_2}$ are shown in Table \ref{6bus_table1}.

The detection process starts by solving problem \ref{Algorithm_3's_Alternative_1}-\ref{Algorithm_3's_Alternative_3} to optimality using the greedy algorithm (Algorithm \ref{greedy_algorithm}) to find the best attack vector against line 3-5, and also determines the most sensitive buses associated with this line. 
In Algorithm 1, all buses are sorted based on their PTDF absolute values in descending order. Then, considering alpha $\alpha = 10 \%$, the maximum possible deviation is assigned to each bus from the top to the bottom, and constraints \ref{Algorithm_3's_Alternative_2} and \ref{Algorithm_3's_Alternative_3} are imposed on each step.  The proposed algorithm sorts the buses based on their values for the target asset (PTDFs), and then it uses a for-loop to find the optimum point. The main time-consuming step is to sort all of the items in decreasing order of their values. If the buses are already arranged in the required order, then for-loop takes O(N) time (N is the number of variables). Otherwise, since the average time complexity of the sorting step is O(NlogN), the total time is O(NlogN).

The best attack vector, PTDFs, vector $\mathbf{a_1}$, and vector $\mathbf{a_2}$ are reordered and shown in Table \ref{6bus_table2}.

As shown in Table \ref{6bus_table2}, considering $0.05$ as the cut-off value for PTDFs, the TNSB for line 3-5 is $4$ (buses 3, 6, 5, and 2). Likewise, $\alpha_{3-5}^{min}$ is $0.035$ (assuming that flow deviation more than $0.3$ MW cause an overflow), which means that the magnitude of each bus's deviation should be compared with $3.5$ percent of the forecasted load at that bus. 

Vector $\mathbf{a_1}$ has deviations at three buses with proper directions and magnitudes to cause an overload on line 3-5, which means that the NPDSB of this vector is $3$ (buses $3$, $6$, and $5$). On the other hand, vector $\mathbf{a_2}$ has one deviation with proper direction and magnitude (bus $3$), which implies that the NPDSB of this vector is $1$. Consequently, the ratio of the NPDSB of vector $\mathbf{a_1}$ to the TNSB is $0.75$, so it is flagged as an attack (considering $0.5$ as the threshold for NPDSBs). However, this ratio for vector $\mathbf{a_2}$ is $0.25$, which implies that vector $\mathbf{a_2}$ is not a malicious set of deviations.

\renewcommand{\arraystretch}{1}
\begin{table}[H]
\scriptsize
\centering
\caption{Two randomly generated vectors representing net injection deviations, original loads, PTDFs: ordered by bus number.}
\begin{tabular}{|c|c|c|c|c|}
\hline
\textbf{Bus} & \textbf{load (MW)}& \textbf{PTDF} & $\mathbf{a_1}$ \textbf{(MW)}& $\mathbf{a_2}$ \textbf{(MW)}\\ 
\hline
$1$ & $10$ & $0$ & $-0.456$ & $0.976$\\ 
\hline
$2$ & $15$ & $0.062$ & $-0.127$ & $-0.954$ \\
\hline
$3$ & $15$ & $0.289$ & $1.136$ & $1.143$ \\
\hline
$4$ & $30$ & $0.0183$ & $-0.564$ & $-2.051$ \\
\hline
$5$ & $20$ & $-0.1207$ & $-0.751$ & $1.519$ \\
\hline
$6$ & $10$ & $0.152$ & $0.762$ & $-0.633$ \\
\hline
\end{tabular}
\label{6bus_table1}
\end{table}

\captionsetup[algorithm]{font=footnotesize}
\begin{algorithm}[H]
\scriptsize
\caption{The greedy algorithm that is used to optimize problem \ref{Algorithm_3's_Alternative_1}-\ref{Algorithm_3's_Alternative_3}.}
\textbf{Input:} The SE outputs and forecasted loads.  \\
\textbf{Output:} A vector including $n_b$ deviations associated to each bus $i \in N$ $(x[i])$.
%\begin{multicols}{2}
\begin{algorithmic}[1]
\State $X \gets 0$;
\State $x[i]\gets u_b[i]$;
\For{\texttt{$i \gets 1 ~ to ~ N$}}
\State $Sorted~PTDF \gets Sort~ PTDFs ~in ~a~ descending~ order$;
\State $X \gets X + u_b[i]$;
\EndFor
\For{\texttt{$i \gets $ Sorted-PTDF indexes}}
\If{$Flow_{[target-line]} \leq 0$}
\If{$(x[i]-l_b[i]) \leq X$} 
\State $X \gets X - (x[i]-l_b[i])$;
\State $x[i] \gets l_b[i]$;
\Else 
\State $x[i] \gets (x[i] - X)$;
\State $X \gets 0$;
\EndIf
\Else 
\If{$(x[i]-l_b[i]) \leq X$} 
\State $X \gets X - (x[i]-l_b[i])$;
\State $x[i] \gets u_b[i]$;
\Else
\State $x[i] \gets (X - x[i])$;
\State $X \gets 0$;
\EndIf

\EndIf
\State \textbf{return} $x[i]$
\EndFor
\end{algorithmic}
\label{greedy_algorithm}
%\end{multicols}
\end{algorithm}

\begin{table}[!ht]
\scriptsize
\centering
\caption{Two randomly generated vectors representing net injection deviations, original loads, PTDFs, the best attack vector, and the deviation thresholds: ordered based on the PTDF values.}
\begin{tabular}{|c|c|c|c|c|c|c|} \hline
\multirow{3}{*}{\makecell{\textbf{Bus}}} &  
     &  
\multirow{3}{*}{\makecell{\textbf{PTDF}}} & 
\textbf{Best} & 
     & 
 \multirow{3}{*}{\makecell{$\mathbf{a_1}$\textbf{(MW)}}} & \multirow{3}{*}{\makecell{$\mathbf{a_2}$\textbf{(MW)}}}  \\
 & \textbf{load} &   & \textbf{Attack} & \textbf{Magnitude} & &  \\
 &  \textbf{(MW)} &  & \textbf{(MW)} & \textbf{Threshold}  & &  \\ 
\hline
$3$ & $15$ & $0.289$ & $1.5$ & $0.525$ & $1.136$ & $1.143$  \\  \cline{1-7}
$6$ & $10$ & $0.152$ & $1.0$ & $0.350$ & $0.762$ & $-0.633$ \\  \cline{1-7}
$5$ & $20$ & $-0.120$& $-2.0$& $0.700$ & $-0.751$& $1.519$  \\  \cline{1-7}
$2$ & $15$ & $0.062$ & $1.5$ & $0.525$ & $-0.127$& $-0.954$ \\  \cline{1-7}
$4$ & $30$ & $0.018$ & $-1.0$& $1.050$ & $-0.564$& $-2.051$ \\  \cline{1-7}
$1$ & $10$ & $0.000$ & $-1.0$& $0.350$ & $-0.456$& $0.976$  \\ 
\hline
\end{tabular}
\label{6bus_table2}
\end{table}

\begin{figure}[!ht]
\centering
\includegraphics[trim =51mm 220.5mm 9mm 27.5mm, clip, width=11.0cm]{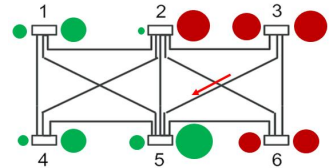}
\caption{6-Bus test case diagram including the attackers' preferred load deviations spectrum and vector $\mathbf{a_1}$'s load deviations spectrum: the left-hand side circles are related to $\mathbf{a_1}$ and the right-hand side circles are related to the best attack vector.}
\label{6Bus_Bestattack&A1}
\end{figure}

Fig. \ref{6Bus_Bestattack&A1} visually displays and compares the best attack vector with random
attack vector $\mathbf{a_1}$. The circle on the left-hand side of each
bus is related to vector $\mathbf{a_1}$, and the circles on the right-hand side are related to the
best attack vector. The size of each circle indicates the load deviation magnitude (a
larger circle implies a more sensitive bus), and the color of each circle indicates the
load deviation direction (red circles mean positive deviations and green circles
mean negative deviations).

\subsection{Case Study on the Modified 2383-Bus Polish Test System} \label{resultsB}
In this case, we evaluated the scalability of our detection
scheme by using a modified version of the 2383-Bus Polish Test System \cite{zimmerman1997matpower2}. The modifications include: decreasing the
line continuous thermal ratings to create base case attacks and setting the negative loads to zero.
In this section, we did multiple evaluations to show the promising features of our proposed detection approach. First, we generated two attack vectors for line $169$ by solving problem  \ref{Algorithm_3's_Alternative_1}-\ref{Algorithm_3's_Alternative_3} two times. The first time we solved this problem with a commercial optimization package and the second time with the proposed greedy algorithm. The goal of this experiment was to numerically demonstrate the ability of the proposed greedy algorithm to find the global solution for problem \ref{Algorithm_3's_Alternative_1}-\ref{Algorithm_3's_Alternative_3} and get the same results as the commercial solver. 
Second, we analyzed the effectiveness and efficiency of the generated attack vector by showing the power flows in the control room that operators see and the actual physical power flows after the attack. 
Third, we demonstrated the ability of the proposed mechanism to detect random LR attacks and distinguish them from both Gaussian and non-Gaussian noise errors.

\subsubsection{Solving Problem \ref{Algorithm_3's_Alternative_1}-\ref{Algorithm_3's_Alternative_3} by Two Methods}\label{B.1}
In this subsection, we demonstrated the ability of the proposed algorithm to solve the special structure of the core problem to optimality by comparing its results with the results of solving problem \ref{Algorithm_3's_Alternative_1}-\ref{Algorithm_3's_Alternative_3} by a commercial optimization package (GUROBI \cite{gurobi}).
To do so, we solved problem \ref{Algorithm_3's_Alternative_1}-\ref{Algorithm_3's_Alternative_3} for line $169$, considering $\alpha$ equals to $10 \%$. Both simulations were run in JAVA on an Intel(R) Xeon(R) CPU with 48 GB of RAM. The attack vectors from both methods perfectly matched each other. Fig. \ref{load_deviations} shows the false deviations associated with some of the most sensitive buses.

\begin{figure}[!ht]
\centering
\includegraphics[trim =18mm 66mm 0.5mm 13mm, clip, width=9.5cm]{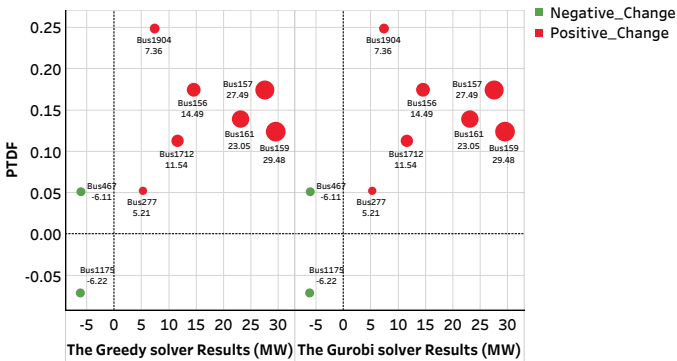}
\caption{Load deviations at some of the most sensitive buses with respect to the target line $169$ in the 2383-Bus Polish system; achieved by solving problem \ref{Algorithm_3's_Alternative_1}-\ref{Algorithm_3's_Alternative_3} using both GUROBI and greedy solvers.}
\label{load_deviations}
\end{figure}

\subsubsection{Attack Efficiency Analysis}
Here, similar to the example in subsection \ref{3.A}, we applied the attack vector to the initial load and ran the DCOPF and DCPF to find both cyber and actual physical power flows on the target line. The results showed that the attack was effective and efficient to cause an overflow.

Here, by effectiveness and efficiency, we meant that the created attack has enough energy to damage the target line. Table \ref {pre-postflows} provides the target line power flow results after the attack, including the cyber flow (the control room flow), physical flow, and the amount of overflow.

\begin{table}[!h]
\scriptsize
\centering
\caption{The control room flow, actual physical flow, and the overflow on line $169$ after the best LR attack against line $169$.}
\begin{tabular}{|c|c|}
\hline
\textbf{Line No.}   & 169   \\
\hline
\textbf{Continuous Thermal Rating (MW)}  & 926.62 \\
\hline
\textbf{Cyber Power Flow (MW)} & -926.62\\
\hline
\textbf{Physical Power Flow (MW)}  & -1178.136\\
\hline
\textbf{Overflow (MW)}  & 251.516\\
\hline
\end{tabular}
\label{pre-postflows}
\end{table}

As shown in Table \ref {pre-postflows}, the generated attack was successful in causing $251.516$ MW overflow on the target line, where it was undetectable for the system operator who saw $926.62$ MW power flow on this line, which is not more than its continuous thermal rating.

\subsubsection{Detection Mechanism Efficiency Analysis}
In this section, we investigated and analyzed the ability and success rate of the proposed method to detect some random weakened LR attacks and distinguish them from noise errors. 
To do so, we broke this subsection down into two parts. First, we did some experiments by targeting line $169$, and second, we took line $251$ as the target line, and repeated all the tests, similar to line $169$. Then, we created different sets of random LR attacks and Gaussian/non-Gaussian noise errors against both target lines while demonstrating the physical effect versus the NPDSB of each set.

\paragraph{Line 169}
This line is categorized as a critical line since there is at least one scenario of LR attack with $\alpha$ at most $10 \%$ that could make this line physically overloaded. Its continuous thermal rating is $926.62$ MW, pre-attack power flow is negative (we had to use $-$ in equation \ref{Algorithm_3's_Alternative_1} to find the best attack), TNSB is $1168$ (cut-off value is $0.05$), and $\alpha^{min}_{169}$ is $0.0425$.

To validate the capability of the proposed method in order to detect and distinguish random LR attacks from normal noise errors, we did two experiments. First, we generated $1000$ random LR attack vectors and compared their physical effects with the physical effects of $1000$ random Gaussian noise errors. Second, we generated $1000$ random LR attacks and $1000$ random Cauchy noise errors (non-Gaussian) and, similar to the first experiment, compared their physical effects.

\begin{table*}[hb]
\centering
\caption{Detailed results associated with two scenarios of random LR attack and two scenarios of normal noise errors for lines $169$ and $251$, including the actual physical flows, control room flows, and ratios of NPDSB to the TNSB.}
\label{my-table}
\begin{tabular}{|c|c|c|c|c|c|c|c|c|c|c|c|}
\hline
\multicolumn{4}{|c|}{\textbf{}} & \multicolumn{4}{c|}{\textbf{Random Attack}} & \multicolumn{4}{c|}{\textbf{Normal Noise}} \\ \hline
\textbf{\begin{tabular}[c]{@{}c@{}}Line \\ No.\end{tabular}} & \textbf{\begin{tabular}[c]{@{}c@{}}$\mathbf{P_k^{max}}$\end{tabular}} & \textbf{TNSB} & \textbf{\begin{tabular}[c]{@{}c@{}}Case \\ No.\end{tabular}} & \textbf{NPDSB} & \textbf{\begin{tabular}[c]{@{}c@{}}Physical \\ Flow\\  (MW)\end{tabular}} & \textbf{\begin{tabular}[c]{@{}c@{}}Cyber \\ Flow (MW)\end{tabular}} & $\mathbf{\frac{NPDSB}{TNSB}}$ & \textbf{NPDSB} & \textbf{\begin{tabular}[c]{@{}c@{}}Physical \\ Flow \\ (MW)\end{tabular}} & \textbf{\begin{tabular}[c]{@{}c@{}}Cyber \\ Flow (MW)\end{tabular}} & $\mathbf{\frac{NPDSB}{TNSB}}$ \\ \hline
\multirow{2}{*}{$169$} & \multirow{2}{*}{$926.62$} & \multirow{2}{*}{$1168$} & $1$ & $762$ & $-1054.2$ & $-901.3$ & $0.65$ & $18$ & $-829.4$ & $-825.6$ & $0.015$ \\ \cline{4-12} 
 &  &  & $2$ & $771$ & $-981.2$ & $-884.1$ & $0.66$ & $29$ & $-846.5$ & $-832.9$ & $0.024$ \\ \hline
\multirow{2}{*}{$251$} & \multirow{2}{*}{$387.34$} & \multirow{2}{*}{$998$} & $1$ & $632$ & $-417.5$ & $-314.9$ & $0.63$ & $16$ & $-287.5$ & $-287.8$ & $0.016$ \\ \cline{4-12} 
 &  &  & $2$ & $623$ & $-395.2$ & $-304.2$ & $0.62$ & $14$ & $-287.3$ & $-286.2$ & $0.014$ \\ \hline
\end{tabular}
\end{table*}

To achieve each random attack vector, we solved problem \ref{Algorithm_3's_Alternative_1}-\ref{Algorithm_3's_Alternative_3}, and each time $\alpha$ ($10 \%$) was multiplied to a random number between $0.425$ and $1$\textemdash based on the fact that for $\alpha$ less than $0.0425$ there is no successful attack with enough energy to damage line $169$. Next, we added a constraint to force the deviations at $150$ randomly selected sensitive buses (for line $169$) to be zero.

We generated Gaussian random noise vectors from a Gaussian distribution with $\mu =0$ and $\delta = \alpha L/3.1$ in such a way that the deviation at each bus was limited to $\alpha$ percent of the forecasted load in either directions. There was no change at zero injection buses, and the net load change in the system was very small. Moreover, we extracted the Cauchy noise vectors from a Cauchy distribution with location $x_0 =0$ and scale $\gamma = 0.01L/3.1$ \cite{cauchy}. All random Cauchy noise errors were created and subjected to the same three constraints, which were applied to the process of creating random Gaussian noise errors.

Fig. \ref{NoiseAttackSeparation169} demonstrates different physical flows on the target line associated with each set versus their respective NPDSBs. This figure includes two sub-figures, where sub-figure (a) shows the comparison between $1000$ sets of random LR attacks and $1000$ sets of Cauchy noise errors and sub-figure (b) shows the same comparison for another $1000$ sets of random LR attacks and $1000$ sets of Gaussian noise errors.
To get the physical power flows in Fig. \ref{NoiseAttackSeparation169}, we followed the same procedure for the 3-bus system example in subsection \ref{3.A}.

As illustrated in Fig. \ref{NoiseAttackSeparation169}, our method flagged $100 \%$ of the $2000$ scenarios of random LR attacks against line $169$ (red points); all points with NPDSBs more than $1168 \times 0.5$ were considered as malicious movements. Likewise, considering both random Gaussian and non-Gaussian noise errors, the results validated the accuracy of the proposed method to differentiate random attacks from noise errors (blue points).

\begin{figure}[!ht]
\centering
\includegraphics[trim =5mm 1mm 0.5mm 5mm, clip, width=8.5cm]{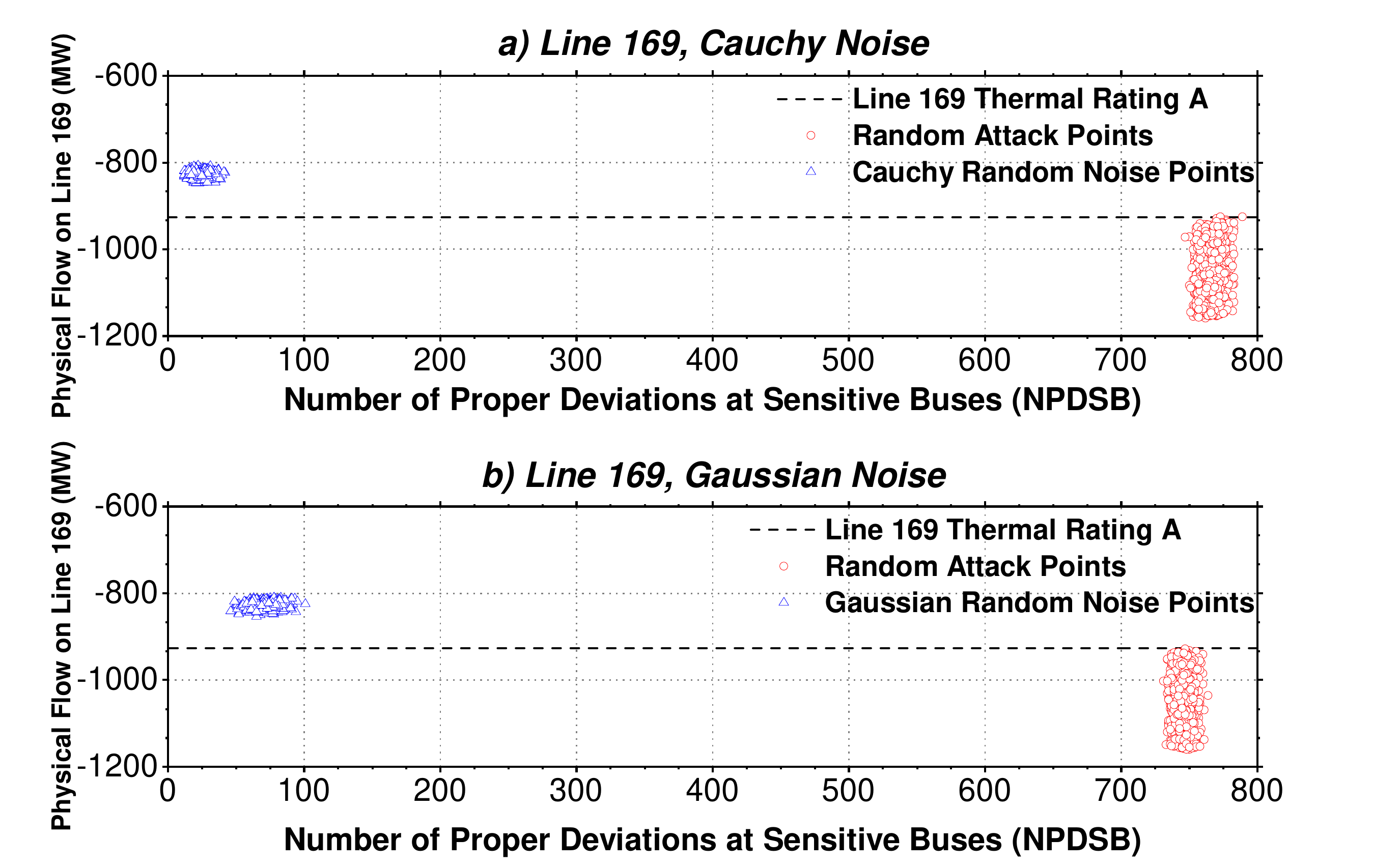}
\caption{Physical effects of different scenarios of load deviations on line $169$ versus the NPDSBs. Sub-figure $a)$ shows the comparison of the physical effects of $1000$ random LR attacks with $1000$ random Cauchy errors ($x_0 =0, ~ \gamma = 0.01L/3.1$) and sub-figure $b)$ shows the comparison of the physical effects of $1000$ random LR attacks with $1000$ random Gaussian errors ($\mu =0, ~ \delta = \alpha L/3.1$).}
\label{NoiseAttackSeparation169}
\end{figure}

\paragraph{Line 251}
Similar to line $169$, line $251$ is a critical line with the continuous thermal rating at $387.37$ MW, TNSB at $998$ (cut-off value is $0.05$), $\alpha^{min}_{251}$ at $0.0686$, and negative pre-attack power flow.

In this part, we did all the simulations that we did for line $169$ to evaluate the functionality of the proposed method to detect LR attacks against another target line. As illustrated in Fig. \ref{NoiseAttackSeparation251}, our scheme successfully detected all $2000$ scenarios of random LR attack (red points); every point with the NPDSB more than $998 \times 0.5$ was flagged as a malicious movement. Furthermore, the proposed method distinguished both types of noise errors from the random LR attacks against this target line, even those, which could not cause an overflow. 

According to the results, there were some random attack scenarios, which had
not enough energy to cause an overflow on the target line (red points above the line
related to continuous thermal rating). It is because some of the randomly selected
buses with zero deviations were among the most sensitive buses. Although these scenarios were not successful in causing an overflow on the target line, the proposed method 
flagged them as a malicious movement since their NPDSBs were more than the determined thresholds. We generated these scenarios to show our method's capability, while this may not be the case in reality.

\begin{figure}[!ht]
\centering
\includegraphics[trim =8mm 2.5mm 0.5mm 5mm, clip, width=8.5cm]{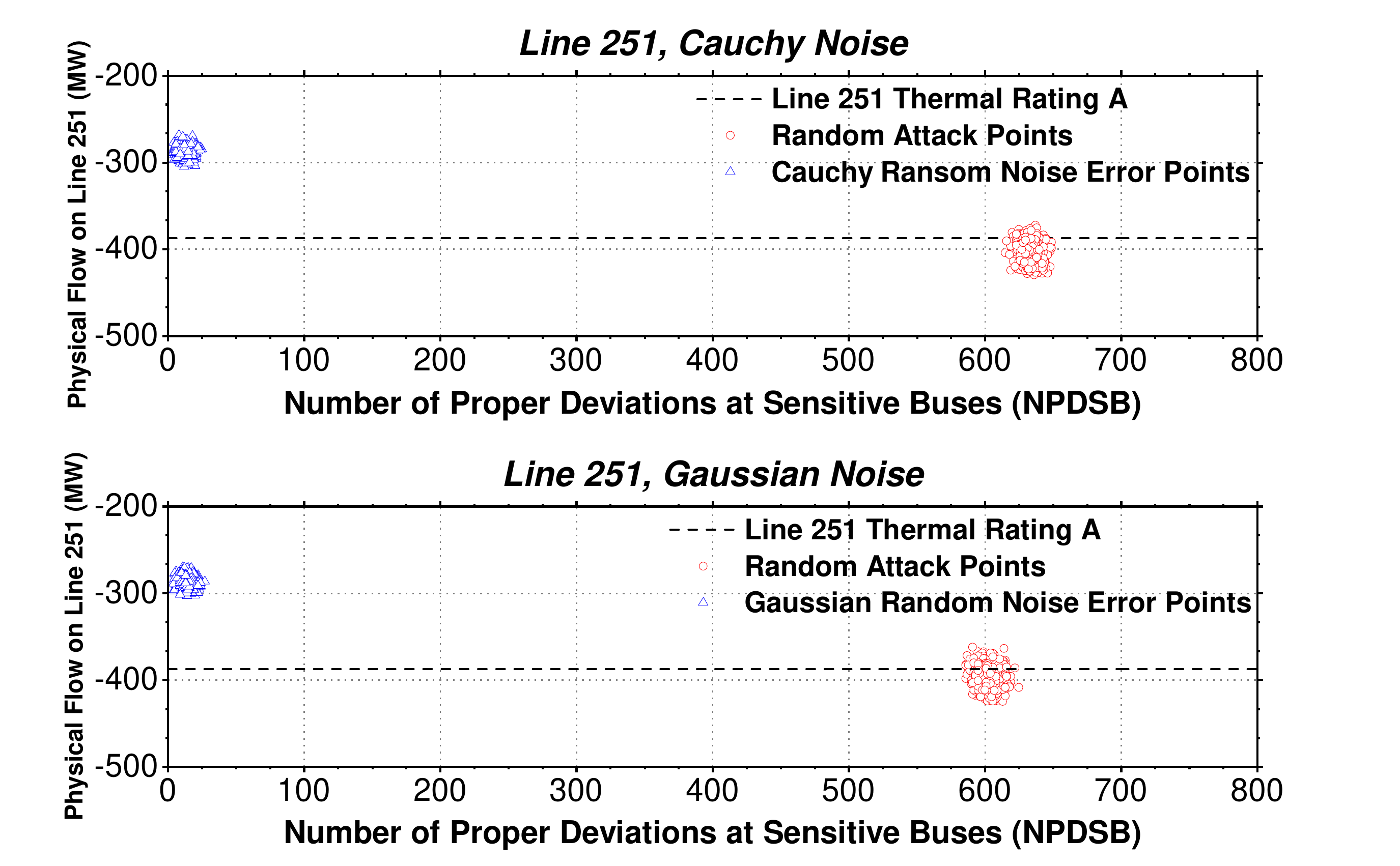}
\caption{Physical effects of different scenarios of load deviations on line $251$ versus the NPDSBs. Sub-figure $a)$ shows the comparison of the physical effects of $1000$ random LR attacks with $1000$ random Cauchy errors ($x_0 =0, ~ \gamma = 0.01L/3.1$) and sub-figure $b)$ shows the comparison of the physical effects of $1000$ random LR attacks with $1000$ random Gaussian errors ($\mu =0, ~ \delta = \alpha L/3.1$).}
\label{NoiseAttackSeparation251}
\end{figure}

Due to the large number of scenarios and limited space, all the results were depicted in Fig. \ref{NoiseAttackSeparation169} and Fig. \ref{NoiseAttackSeparation251}. Additionally, we presented detailed results for eight scenarios in Table \ref{my-table}.

All attack scenarios in Table \ref{my-table} made the target lines physically overloaded. Although 
there was no overflow on either of the target lines based on the cyber power flows, our method successfully flagged all attack cases since the ratios of their NPDSBs to the TNSBs were more than the determined thresholds. For example, in case $1$ for line $169$, the attack could successfully cause a physical overflow on line $169$ ($13.7 \%$ overflow), but the cyber power flow in the control room showed $-901.3$ MW, which is within the thermal limits of the line. This value of cyber power flow prevents the operator from being notified of the existence of an attack in the system. However, for an EMS equipped with our proposed method, even though the control room power flow shows a secure point of operation, the operator can flag the attack since the ratio of the NPDSB to the TNSB is $0.65$. 

On the other hand, the ratios of all the NPDSBs to the TNSBs associated with random noise scenarios were less than the thresholds, based on which our method did not flag these sets of load deviations. Likewise, the physical power flows on all transmission assets were all within their capacity limits, which confirmed the decisions made by our method. For instance, case $1$ for line $169$ is categorized as a normal noise since the ratio of its NPDSB to the TNSB is $0.015$, which is not even close to the proposed threshold.

\section{Conclusion and Future Work}
Developing an efficient, fast, and practical detection mechanism for real-time operations that causes minimum changes in the structure of existing EMSs is a challenging task. In this study, by using a deeper understanding of power systems, a real-time, fast, and intelligent false data detector that flags LR attacks is introduced, designed, and evaluated. We first used power systems domain insights to identify an exploitable model for the core problem of LR attacks. Then, by proving that a simple greedy algorithm is able to solve this model to optimality, the proposed detection mechanism is designed to find the most sensitive buses with respect to their impact on a target transmission asset. The results demonstrated that the greedy algorithm is pretty fast; Algorithm $1$ takes only several milliseconds to find the global solution (for each transmission asset). 

Likewise, the efficiency of the proposed method to detect some weakened random LR attacks and its ability to distinguish malicious deviations from both random Gaussian and non-Gaussian noise errors were evaluated. According to Fig. \ref{NoiseAttackSeparation169} and Fig. \ref{NoiseAttackSeparation251}, the proposed method can detect all $4000$ random attack scenarios and distinguish them from all $4000$ sets of random noise errors, which implies a $100 \%$ success rate in detecting and distinguishing these scenarios. 

Regarding the discrepancies between AC and DC modeling of power systems, we examined the accuracy of our method by comparing the physical consequences of a random attack vector, created by problem (\ref{Algorithm_3's_Alternative_1})-(\ref{Algorithm_3's_Alternative_1}), after running both AC/DC power flows. The results demonstrated negligible differences between the DC and AC power flows on the target assets, which implies our method functionality when operators use AC equations to model transmission assets' power flows. Moreover, we should mention that the color of each circle at the right-hand side of each bus (Fig. \ref{6Bus_Bestattack&A1}), which is related to the best attack vector, does not change from red to green or vice versa in the AC modeling of power systems. Hence, based on these two facts, our method still could be applied to real-world EMSs.

In this study, we developed a detection mechanism by claiming that using protection schemes are not enough to fight against cyber-attacks. However, detecting a set of false measurements is not enough to complete all security actions against cyber-attacks without any fast and appropriate corrective action. Therefore, a good direction for future study could be developing corrective actions, which are compatible with this method.

\section*{Acknowledgment}
The authors would like to thank Dr. Charlie Colbourn at Arizona State University for his kind help through this study. Likewise, I want to thank the supports from M. Ghaljehei and R. Khalilisenobari.

\ifCLASSOPTIONcaptionsoff
  \newpage
\fi

\bibliographystyle{IEEEtran}
\bibliography{IEEEabrv,Bibliography}

\def\@IEEEBIOphotowidth{1cm}    % width of the biography photo area 
\def\@IEEEBIOphotodepth{1cm}   % depth (height) of the biography photo area
% area cleared for photo 
\def\@IEEEBIOhangwidth{1.2cm}    % width cleared for the biography photo area
\def\@IEEEBIOhangdepth{1.2cm}    % depth cleared for the biography photo area
\begin{IEEEbiographynophoto}{Ramin Kaviani}
received the B.Sc. and M.Sc. degrees in electrical engineering from Shahid Bahonar University of Kerman, Iran. He is a Ph.D. candidate in the School of Electrical, Computer, and Energy Engineering at Arizona State University, working under the supervision of Dr. Kory W. Hedman. His research interests include power systems operations and planning, cyber-security in smart grids, and energy markets. In Summer 2020, he was a graduate intern at the New York Independent System Operator (NYISO).
\end{IEEEbiographynophoto}
\begin{IEEEbiographynophoto}{Kory W. Hedman}
specializes in three disciplines and holds six degrees: BS in Electrical Engineering and BS in Economics (University of Washington), MS in Electrical Engineering and MS in Economics (Iowa State University), and the MS and PhD degrees in Operations Research (University of California, Berkeley). Currently, Dr. Hedman is a Program Director for the Advanced Research Projects Agency-Energy (ARPA-E) US Department of Energy. Hedman led the efforts for the Grid Optimization (GO) Competition and the PERFORM Program. Before joining ARPA-E, Hedman was funded by the ARPA-E GENI and NODES programs. Dr. Hedman’s early career research led to two prestigious awards. In January 2017, US President Barack H. Obama awarded Kory W. Hedman the Presidential Early Career Award for Scientists and Engineers (PECASE), the highest honor bestowed by the United States Government on science and engineering professionals in the early stages of their independent research careers; Hedman was nominated by the US Department of Energy. Dr. Hedman has also received the IEEE Power and Energy Society Outstanding Young Engineer Award. In his first year serving on the IEEE PES Phoenix Chapter committee, Hedman expanded activities and outreach, which led to the chapter receiving the best IEEE PES chapter award (worldwide). Through the years he served on the committee, the chapter received other local and regional awards. Starting in the Fall of 2020, Dr. Hedman will be the Director of the Power Systems Engineering Research Center (PSERC). Dr. Hedman is a senior member of IEEE and a member of INFORMS.
\end{IEEEbiographynophoto}

%% if you will not have a photo at all:
%\begin{IEEEbiographynophoto}{Ignacio Ramos}
%(S'12) received the B.S. degree in electrical engineering from the University of Illinois at Chicago in 2009, and is currently working toward the Ph.D. degree at the University of Colorado at Boulder. From 2009 to 2011, he was with the Power and Electronic Systems Department at Raytheon IDS, Sudbury, MA. His research interests include high-efficiency microwave power amplifiers, microwave DC/DC converters, radar systems, and wireless power transmission.
%\end{IEEEbiographynophoto}

%% insert where needed to balance the two columns on the last page with
%% biographies
%%\newpage

%\begin{IEEEbiographynophoto}{Jane Doe}
%Biography text here.
%\end{IEEEbiographynophoto}
% ==== SWITCH OFF the BIO for submission
% ==== SWITCH OFF the BIO for submission

% You can push biographies down or up by placing
% a \vfill before or after them. The appropriate
% use of \vfill depends on what kind of text is
% on the last page and whether or not the columns
% are being equalized.

\vfill

% Can be used to pull up biographies so that the bottom of the last one
% is flush with the other column.
%\enlargethispage{-5in}

% that's all folks
\end{document}